\def\BibTeX{{\rm B\kern-.05em{\sc i\kern-.025em b}\kern-.08em
    T\kern-.1667em\lower.7ex\hbox{E}\kern-.125emX}}
\newtheorem{theorem}{Theorem}
\newtheorem{lemma}{Lemma}
\newtheorem{definition}{Definition}
\newtheorem{example}{Example}
\newtheorem{principle}{Principle}
\newcommand{\mypath}{\lambda}
\newcommand{\chain}{\gamma}
\newcommand{\pv}{\pi} 
\begin{document}

\title{
Longer Is Shorter: Making Long Paths to Improve the Worst-Case Response Time of DAG Tasks}

\author{Qingqiang He\textsuperscript{1},
Nan Guan\textsuperscript{2},
Mingsong Lv\textsuperscript{1}
\\
\\
\textsuperscript{1}The Hong Kong Polytechnic University, China\\
\textsuperscript{2}City University of Hong Kong, China\\
}

\maketitle
\begin{abstract}\
DAG (directed acyclic graph) tasks are widely used to model parallel real-time workload.
The real-time performance of a DAG task not only depends on its total workload, but also its graph structure. Intuitively, with the same
total workload, a DAG task with looser precedence constraints tends to have better real-time performance in terms of worst-case response time.
However, this paper shows that actually we can shorten the worst-case response time of a DAG task by carefully adding new edges and constructing longer paths.
We develop techniques based on the state-of-the-art DAG response time analysis techniques to properly add new edges so that the worst-case response time bound guaranteed by formal analysis can be significantly reduced.
Experiments under different parameter settings demonstrate the effectiveness of the proposed techniques.
\end{abstract}


\section{Introduction}
\label{sec:introduction}
More and more real-time applications are parallelized to execute on multi-core processors for high performance and energy efficiency.
DAG (directed acyclic graph) is a widely used model to describe the structure constraints of parallel real-time tasks.
As an example, a processing chain from perception to control in the autonomous driving system can be modeled as a sporadic DAG task \cite{verucchi2020latency}.
There have been a large number of research works on real-time scheduling and analysis of DAG tasks in recent years \cite{ueter2018reservation, chen2019timing, han2019response, jiang2020real, ueter2021response, zhao2022dag, osborne2022minimizing, ueter2022parallel, bi2022response, lin2022type}, where a fundamental problem to solve is how to upper-bound the worst-case response time of a DAG task executing on a parallel processing platform.

The worst-case response time of a DAG task depends on its graph structure characteristics.
Intuitively, given the same total workload, a DAG task with looser precedence constraints among the vertices tends to have a shorter response time,
as its workload has a better chance to be executed in parallel and thus utilize computing resources better.
On the contrary, a DAG task with stricter precedence constraints tends to have a larger response time, as the workload has to be executed more sequentially. It seems that enforcing more precedence constraints on a DAG task is detrimental to its responsiveness.

However, this paper shows that actually we can improve (i.e., shorten) the worst-case response time of a DAG task by carefully enforcing extra precedence constraints, i.e., adding new edges to the original DAG\footnote{Note that adding new edges does not require changing the task itself. This can be achieved by, e.g., letting the scheduler be aware and enforce the corresponding precedence constraints when scheduling the vertices.}.
The key observation is that, by properly adding new edges and thus constructing some longer paths, we can reduce the worst-case interference to its critical path, which is the bottleneck for the DAG task to finish execution, and thus shorten the worst-case response time. The challenge is how to find the right edges to add so that the worst-case response time bound guaranteed by formal analysis is indeed improved.

In this paper, we develop techniques based on the above observation to improve the worst-case response time bound guaranteed by formal analysis.
More specifically, we identify the principles of adding edges by carefully examining the dependencies inside a DAG task and propose a simple but rather effective method to add edges based on the state-of-the-art worst-case response time analysis techniques in \cite{he2022bounding}.
As pointed out in \cite{zhao2022dag}, DAG tasks are to model the two major characteristics of parallel applications: \emph{parallelism} and \emph{dependency}.
The work in \cite{he2022bounding} utilizes long paths to explore the parallelism inside DAG tasks to improve system schedulability.
This paper is to explore the dependencies among vertices inside DAG tasks, further advancing the state-of-the-art.

We also propose a scheduling approach that applies the developed techniques to task systems consisting of multiple DAG tasks.
Experiments show that the proposed method significantly outperforms the state-of-the-art, reducing the worst-case response time bound by 21.6\% and improving the system schedulability by 22.2\% on average.

\section{System Model}
\label{sec:model}
\subsection{Task Model}
\label{sec:task}

\begin{figure}[t]
\centering
\subfloat[a DAG task]{
    \includegraphics[width=0.4\linewidth]{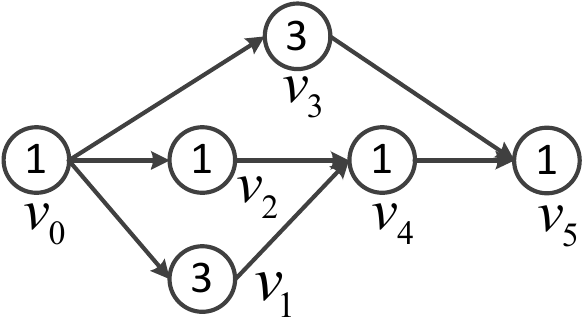}
    \label{fig:dag_example}
}
\hfil
\subfloat[a possible schedule]{
    \includegraphics[width=0.45\linewidth]{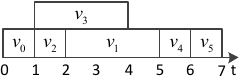}
    \label{fig:sequence1}
}
\caption{Illustration of the system model.}
\label{fig:example}
\end{figure}

We consider a parallel real-time task modeled as a DAG $G = (V, E)$, where $V$ is the set of vertices and $E\subseteq  V \times V $ is the set of edges.
Each vertex $v\in V$ represents a piece of sequentially executed workload and has a WCET (worst-case execution time) $c(v)$.
An edge $(v_i, v_j)\in E$ represents the precedence constraint between $v_i$ and $v_j$, which means that $v_j$ can only start its execution after $v_i$ completes its execution.
A vertex with no incoming edges is called a \emph{source vertex} and a vertex with no outgoing edges is called a \emph{sink vertex}.
Without loss of generality, we assume that $G$ has exactly one source  (denoted as $v_{src}$) and one sink (denoted as $v_{snk}$).
If $G$ has multiple source or sink vertices, we add a dummy source or sink vertex with zero WCET to comply with the assumption.

A \emph{path} $\mypath$ is a set of vertices $(\pv_0, \cdots, \pv_k)$ such that $\forall i\in [0,k-1]$: $(\pv_i, \pv_{i+1})\in E$.
The length of a path is the total workload in this path and is defined as $len(\mypath) \coloneqq \sum_{\pv_i\in \mypath} c(\pv_i)$.
A \emph{complete path} is a path starting from the source vertex and ending at the sink vertex.
Formally, if a path $(\pv_0,\cdots,\pv_k)$ satisfies $\pv_0 = v_{src}$ and $\pv_k = v_{snk}$, then it is a complete path.
The \emph{longest path} is a complete path with the largest length among all paths in $G$.
The length of the longest path in $G$ is denoted as $len(G)$.
For a vertex set $V'\subseteq V$, we define $vol(V') \coloneqq \sum_{v\in V'}c(v)$.
The volume of $G$ is the total workload in $G$ and is defined as $vol(G) \coloneqq \sum_{v\in V} c(v)$.
If there is an edge $(u,v)$, we say that $u$ is a \emph{predecessor} of $v$, and $v$ is a \emph{successor} of $u$.
If there is a path starting from $u$ and ending at $v$, we say that $u$ is an \emph{ancestor} of $v$ and $v$ is a \emph{descendant} of $u$.
The sets of predecessors, successors, ancestors and descendants of $v$ are denoted as $pre(v)$, $suc(v)$, $anc(v)$ and $des(v)$, respectively.

A \emph{generalized path} $\chain=(\pv_0, \cdots, \pv_k)$ is a set of vertices such that $\forall i\in [0,k-1]$: $\pv_i$ is an ancestor of $\pv_{i+1}$. In particular, a vertex set containing only one vertex is a generalized path.
By definition, a path is a generalized path, but a generalized path is not necessarily a path.
Similar to paths, the length of a generalized path $\chain$ is defined as $len(\chain) \coloneqq \sum_{\pv_i\in \chain} c(\pv_i)$.

\begin{example}\label{exp:dag_example}
Fig. \ref{fig:dag_example} shows a DAG task $G$. The number inside vertices is the WCET of this vertex.
The source vertex and the sink vertex are $v_0$ and $v_5$, respectively.
For vertex set $V'=\{v_1, v_2\}$, $vol(V')=4$. The volume of the DAG task is $vol(G)=10$.
For vertex $v_4$, $pre(v_4)=\{v_1, v_2\}$, $suc(v_4)=\{v_5\}$, $anc(v_4)=\{v_0, v_1, v_2\}$, $des(v_4)=\{v_5\}$.
The longest path is $\mypath=(v_0, v_1, v_4, v_5)$, and $len(G) = len(\mypath)=6$.
$\chain=(v_0, v_2, v_5)$ is a generalized path. Note that by definition $\chain$ is not a path, because $(v_2, v_5)$ is not an edge in the DAG task.
\end{example}

\subsection{Scheduling Model}
\label{sec:schedule}
We consider that the DAG task $G$ executes on a computing platform with $m$ identical cores.
A vertex $v$ is said to be \emph{eligible} if all of its predecessors have completed their execution, thus $v$ can be immediately executed if there are available cores.
For a scheduling algorithm, the \emph{work-conserving} property means that an eligible vertex must be executed if there are available cores.
We do not restrict the scheduling algorithm, as long as it satisfies the work-conserving property.
Without loss of generality, we assume the source vertex of $G$ starts its execution at time $0$.
The \emph{response time} of $G$ is defined as the time when the sink vertex finishes its execution.
For example, a possible schedule of the DAG task $G$ in Fig. \ref{fig:dag_example} under a work-conserving scheduler is shown in Fig. \ref{fig:sequence1}. The response time of $G$ in this schedule is 7.

The problem model presented above assumes that the system contains only one DAG task.
Later in Section \ref{sec:extension}, we will extend our developed techniques to deal with systems consisting of multiple DAG tasks.

\section{Motivation}
\label{sec:motivation}
This section uses two examples to provide the intuition that motivates this work.
The first example is relatively simple and illustrates the effect of adding edges.
The second example is more technical and explains the challenge of adding edges.

\subsection{The First Example}
\label{sec:motivation_first}

\begin{figure}[t]
\centering
\begin{minipage}[c]{0.23\textwidth}
\qquad
\subfloat[the DAG task model]{   
    \includegraphics[width=0.7\linewidth]{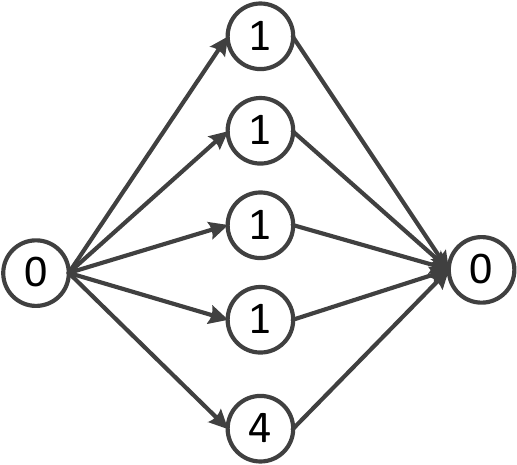}
    \label{fig:dag_intro1} 
}
\end{minipage}
\hfill
\begin{minipage}[c]{0.25\textwidth}
\subfloat[a possible schedule of Fig. \ref{fig:dag_intro1}]{
    \quad
    \includegraphics[width=0.69\linewidth]{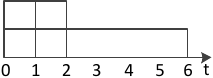}
    \label{fig:seq_intro1}
}\
\subfloat[a schedule after adding edges]{
    \qquad
    \includegraphics[width=0.53\linewidth]{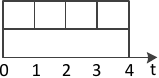}
    \label{fig:seq_intro2}
    \quad
}
\end{minipage}
\caption{The first motivational example.}
\label{fig:intro}
\end{figure}

This subsection uses an example from autonomous driving systems to illustrate the effect of adding edges.
The system has one front camera to monitor the road ahead and four LiDARs for proximity sensing.
There are two ECUs (electronic control unit) for processing the collected information.
Since deep neural networks, which are time-consuming, are usually utilized for processing the scenes captured by the camera, we suppose the execution time of the camera task is 4 and the execution time of each LiDAR task is 1.
This system can be modeled as a DAG task shown in Fig. \ref{fig:dag_intro1}. Note that by Section \ref{sec:task}, two dummy vertices with zero WCET are added to produce a DAG with single source and single sink.
For the DAG task in Fig. \ref{fig:dag_intro1}, a possible schedule is shown in Fig. \ref{fig:seq_intro1}, whose response time is 6.
If we add three edges among the four LiDAR tasks, which ensures that the four LiDAR tasks can only be executed sequentially, the schedule in Fig. \ref{fig:seq_intro1} will be impossible and Fig. \ref{fig:seq_intro2} shows a possible schedule.
In Fig.~\ref{fig:seq_intro2}, the response time is 4, better than that of Fig. \ref{fig:seq_intro1}.

In Fig. \ref{fig:seq_intro1}, it can be easily seen that the bottleneck that affects the response time is the longest path (i.e., the camera task with WCET of 4). And the four LiDAR tasks interfere with the execution of the camera task, which leads to a response time of 6.
However, in Fig. \ref{fig:seq_intro2}, by adding edges, the interference to the longest path is eliminated, which leads to a response time of 4.
This example clearly demonstrates how the worst-case interference is reduced and the worst-case response time is shortened by properly adding edges and constructing longer paths.
We remark that although adding edges is valid, removing edges is not, since edges in DAG tasks usually mean data dependencies or synchronization among vertices.

\subsection{The Second Example}
\label{sec:motivation_second}

The example in Section \ref{sec:motivation_first} demonstrates that strengthening precedence constraints by adding edges has the potential to reduce the response time of a DAG task.
However, to explore this potential to improve the worst-case response time guarantee of a DAG task, we need to solve two problems.
First, systematically designing methods of adding edges requires the guidance of related theories. 
Second, real-time scheduling requires theoretical guarantees concerning the response time of tasks.
To address these two challenges, we employ the state-of-the-art analysis technique in \cite{he2022bounding}, which utilizes the information of multiple long paths in the DAG task to shorten the response time bound.
Next, we briefly introduce this technique.

\begin{definition}[Generalized Path List]\label{def:chain_list}
A generalized path list is a set of disjoint generalized paths $(\chain_i)_0^k$ ($k \ge 0$), i.e.,
$$\forall i, j \in [0, k],\ \chain_i \cap \chain_j = \varnothing$$
\end{definition}
Here $(\chain_i)_0^k$ is the compact representation of $(\chain_0, \cdots, \chain_k)$.

\begin{theorem}[\cite{he2022bounding}]\label{thm:he_bound}
Given a generalized path list $(\chain_i)_0^k$ ($k \in [0, m-1]$) where $\chain_0$ is the longest path of $G$, the response time of DAG task $G$ scheduled by a work-conserving scheduler on $m$ cores is bounded by $R(G)$.
\begin{equation}\label{equ:he_bound}
R(G) \coloneqq \min \limits_{j \in [0, k]} \left\{ len(G)+\frac{vol(G)-\sum_{i=0}^{j} len(\chain_i)}{m-j} \right\}
\end{equation}
\end{theorem}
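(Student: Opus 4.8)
The plan is to fix an arbitrary $j \in [0,k]$ and prove the single inequality $R \le len(G) + \frac{vol(G) - \sum_{i=0}^{j} len(\chain_i)}{m-j}$ for the (unique) response time $R$ of the schedule; since $R$ is one number bounded above by each of these $k+1$ expressions, it is bounded by their minimum $R(G)$, which is exactly the claim. In fact I would first prove a sharper per-schedule inequality in which the span term $len(G)$ is replaced by the length of the schedule's \emph{own} critical path, and only at the very end pass to $len(G)$ by a monotonicity argument. Note that $m-j \ge 1$ throughout, since $j \le k \le m-1$, so the denominator is well defined and positive.

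First I would fix a concrete work-conserving schedule achieving response time $R$ and trace a critical path $\mypath^*$ backwards from the sink: the sink starts exactly when its last-finishing predecessor completes, so I follow that predecessor and repeat down to the source. This yields a genuine source-to-sink path whose vertices execute in disjoint time segments of total length $len(\mypath^*)$, and whose complementary \emph{blocking} segments, of total length $W$ with $R = len(\mypath^*) + W$, have the key property that at every blocking instant the next critical vertex is already eligible yet not running, so work-conservation forces all $m$ cores to be busy. Charging the $mW$ units of work done during blocking to the non-critical workload gives $W \le (vol(G) - len(\mypath^*))/m$, which is precisely the $j=0$, Graham-type case and serves as the base of the argument.

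The heart of the proof, and the step I expect to be the main obstacle, is improving the denominator from $m$ to $m-j$ using the additional disjoint generalized paths $\chain_1, \dots, \chain_j$. Because each $\chain_i$ is a generalized path, its vertices are pairwise ordered by the ancestor relation and therefore execute sequentially, occupying at most one core at a time; and because the paths are mutually disjoint, the currently-running vertices of distinct $\chain_i$ sit on distinct cores. The idea is to split the $m$ busy cores at each blocking instant into up to $j$ cores reserved for the progress of $\chain_1, \dots, \chain_j$ and at least $m-j$ remaining cores running generic work charged against $vol(G) - \sum_{i=0}^{j} len(\chain_i)$, and to integrate this over all blocking time to obtain $W(m-j) \le vol(G) - len(\mypath^*) - \sum_{i=1}^{j} len(\chain_i)$. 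Making this rigorous is delicate for two reasons: a reserved path need not actually be executing at every blocking instant (it may be stalled waiting for a predecessor), and the schedule-dependent $\mypath^*$ may share vertices with some $\chain_i$, which threatens double-counting. The disjointness hypothesis on $(\chain_i)$, together with a careful per-core charging that attributes each core at each instant to at most one category, the critical path, a single $\chain_i$, or generic work, are what I would use to control these two issues.

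Finally I would assemble the pieces. From $R = len(\mypath^*) + W$ and the charged bound on $W$ I obtain $R \le len(\mypath^*) + \frac{vol(G) - len(\mypath^*) - \sum_{i=1}^{j} len(\chain_i)}{m-j}$. Viewing the right-hand side as a function of $len(\mypath^*)$, it is nondecreasing because $m-j \ge 1$, so I may replace $len(\mypath^*)$ by the larger value $len(G)$; this is legitimate since $\chain_0$ is the longest path of $G$, whence $len(\mypath^*) \le len(G)$. Recognizing $len(G) = len(\chain_0)$ then absorbs the span term into the sum, yielding $len(G) + \frac{vol(G) - \sum_{i=0}^{j} len(\chain_i)}{m-j}$ for this $j$, and taking the minimum over $j \in [0,k]$ gives $R \le R(G)$, as required.
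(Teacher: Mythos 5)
Your sketch cannot be compared against a proof in this paper, because the paper states this theorem without proof (it is imported verbatim from \cite{he2022bounding}); so I judge it on its own terms. The opening and closing of your argument are sound and standard: the backward-traced critical path $\mypath^*$, the decomposition $R = len(\mypath^*) + W$ with all $m$ cores busy throughout the blocking time $W$, and the final monotonicity step (valid since $m-j\ge 1$) correctly prove the $j=0$ term. The gap is precisely the step you yourself flag as the ``main obstacle,'' and the charging you propose cannot close it. What the per-instant charging actually yields is $(m-j)W \le vol(G) - len(\mypath^*) - \sum_{i=1}^{j} vol(\chain_i \setminus \mypath^*)$, because the budget for ``generic'' work is the volume of $V \setminus (\mypath^* \cup \chain_1 \cup \dots \cup \chain_j)$. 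This falls short of your claimed inequality $(m-j)W \le vol(G) - len(\mypath^*) - \sum_{i=1}^{j} len(\chain_i)$ by exactly the overlap $\sum_{i=1}^{j} vol(\chain_i \cap \mypath^*)$: the hypotheses make the $\chain_i$ disjoint from each other and from $\chain_0$, but nothing makes them disjoint from the schedule-dependent $\mypath^*$, and neither disjointness nor monotonicity in $len(\mypath^*)$ recovers this loss.

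The failure is real, not presentational. Take $m=2$ and the DAG with source $s$ (WCET $1$), four parallel vertices $a_1, a_2, a_3, u$ with WCETs $5.1, 5, 2, 4.9$ (each a successor of $s$ and a predecessor of $b$), then $b$ (WCET $3$) and sink $t$ (WCET $1$); let $\chain_0=(s,a_1,b,t)$ (the unique longest path, $len(G)=10.1$, $vol(G)=22$) and $\chain_1=(u)$, so the $j=1$ term of (\ref{equ:he_bound}) claims $R \le 10.1 + (22-10.1-4.9)/1 = 17.1$. Consider the work-conserving schedule: $s$ on $[0,1]$; $a_1$ on $[1,6.1]$ and $a_2$ on $[1,6]$; $a_3$ on $[6,8]$; $u$ on $[6.1,11]$ (running \emph{alone} on $[8,11]$, which already refutes the idea that chain progress can be tied to all-cores-busy instants); then $b$ and $t$, giving $R=15$. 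The backward critical path is forced to be $\mypath^*=(s,u,b,t)$, which swallows $\chain_1$ entirely; the blocking time is $W=5.1$, and during blocking only $a_1,a_2,a_3$ execute. Your charging therefore bounds $(m-1)W$ only by $vol(\{a_1,a_2,a_3\})=12.1$, and after the monotonicity step the route delivers $R \le 22 = vol(G)$ — the trivial bound — rather than the claimed $17.1$ (which is nonetheless true, since $R=15$). In other words, whenever $\mypath^*$ meets a chain, the subtraction of $len(\chain_i)$ in (\ref{equ:he_bound}) is not justified by any local core-counting at blocking instants; establishing it requires a global argument trading the overlap against the structure of the DAG (this is where the actual proof in \cite{he2022bounding} does its real work), and that idea is missing from your proposal.
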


\begin{example}\label{exp:he_bound}
For the DAG $G$ in Fig. \ref{fig:dag_example}, we can identify a generalized path list $(\chain_i)_0^2$ where $\chain_0=(v_0, v_1, v_4, v_5)$, $\chain_1=(v_3)$, $\chain_2=(v_2)$.
Let the number of cores $m=2$. For this generalized path list, the bound in (\ref{thm:he_bound}) is
$R(G)=\min\{6+(10-6)/2, 6+(10-6-3)/(2-1)\}=\min\{8, 7\}=7$.
\end{example}

\begin{figure}[t]
\centering
\subfloat[the DAG with edge $(v_2, v_3)$]{
    \includegraphics[width=0.4\linewidth]{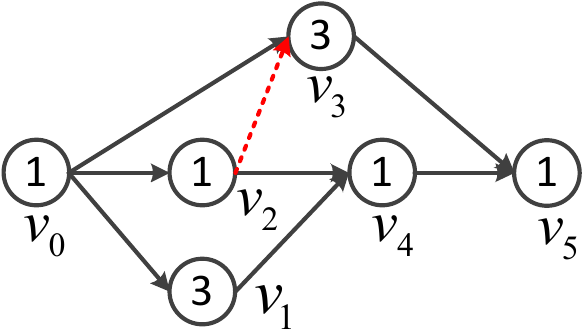}
    \label{fig:dag_example2}
}
\hfil
\subfloat[a possible schedule]{
    \includegraphics[width=0.4\linewidth]{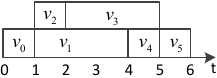}
    \label{fig:sequence2}
}
\caption{The second motivational example.}
\label{fig:motivation}
\end{figure}

Next, we use Fig. \ref{fig:motivation} to explain how adding edges can reduce the response time bound.
The DAG $G'$ in Fig. \ref{fig:dag_example2} is obtained by adding edge $(v_2, v_3)$ to the DAG task $G$ in Fig. \ref{fig:dag_example}.
After adding the edge, the schedule in Fig. \ref{fig:sequence1} becomes invalid and a new possible schedule is shown in Fig. \ref{fig:sequence2}.
In $G'$, we can find a generalized path list $\chain_0=(v_0, v_1, v_4, v_5)$, $\chain_1=(v_2, v_3)$.
And the bound in (\ref{thm:he_bound}) is computed as
$R(G')=\min\{6+(10-6)/2, 6+(10-6-4)/(2-1)\}=\min\{8, 6\}=6$,
which is smaller than $R(G)=7$ in Example \ref{exp:he_bound}.
It can be seen from the examples that there is a strong connection between the response time and the response time bound: for the DAG $G$ in Fig. \ref{fig:dag_example}, the response time is 7 and the bound is 7; after adding edges, for the DAG $G'$ in Fig. \ref{fig:dag_example2}, the response time is 6 and the bound is 6.

Note that adding edges in DAG tasks does not necessarily lead to a smaller response time or smaller response time bound.
We still take the DAG $G$ in Fig. \ref{fig:dag_example} to illustrate this. Let the number of cores $m=2$.
If we add an edge $(v_1, v_3)$, which transforms $G$ into another DAG $G''$, it can be easily seen that the response time of $G''$ can be 8.
In $G''$, we can find a generalized path list $\chain_0=(v_0, v_1, v_3, v_5)$, $\chain_1=(v_2, v_4)$.
And the bound in (\ref{thm:he_bound}) is computed as
$R(G'')=\min\{8+(10-8)/2, 8+(10-8-2)/(2-1)\}=\min\{9, 8\}=8$,
which is larger than $R(G)=7$ in Fig. \ref{fig:dag_example} and $R(G')=6$ in Fig. \ref{fig:dag_example2}.

Therefore, the policies for adding edges should be delicately developed.
In this paper, we carefully examine the dependencies between vertices in DAG tasks and design techniques of adding edges to reduce the response time bound, thus generally reducing the response time of DAG tasks.

\section{Making Long Paths}
\label{sec:method}

This section presents the method of adding edges to make long paths through examining the dependencies among vertices of the DAG task.
In the following, we first discuss the principles of adding edges, then present the algorithm of adding edges to reduce the response time bound of DAG task $G$.

\begin{definition}[Parallel Set] \label{def:parallel_set}
The parallel set of vertex $v$ is defined as
$para(v) \coloneqq \{u\in V \setminus \{v\}|u\notin anc(v) \land u\notin des(v) \}$.
\end{definition}
The parallel set of vertex $v$ contains vertices that can execute in parallel with $v$.
For example, in Fig. \ref{fig:dag_example}, $para(v_1)=\{v_2, v_3\}$.

Concerning adding edge $(u, v)$ for distinct vertices $u$, $v$ (i.e., $u \ne v$), if $u$ is an ancestor of $v$, since there is already precedence relation between $u$ and $v$, adding edge does not change the execution behavior of the DAG task. Therefore, if $u$ is an ancestor of $v$, there is no need for adding edge $(u, v)$. If $u$ is a descendant of $v$, adding edge $(u, v)$ leads to circles in the DAG, which violates the semantics of directed acyclic graph. Therefore, if $u$ is a descendant of $v$, we cannot add edge $(u, v)$. We summarize this into Principle \ref{prp:para}.

\begin{principle}\label{prp:para}
For distinct vertices $u$, $v$, if $u \in para(v)$, then edge $(u, v)$ can be added in the DAG task.
\end{principle}

Principle \ref{prp:para} specifies whether an edge can be added or not, but does not indicate whether an edge should be added or not. Recall that our target of adding edge is to reduce the response time bound in (\ref{thm:he_bound}). If the length of the longest path increases after adding edges, i.e., $len(G)$ becomes larger, we cannot guarantee that the response time bound in (\ref{thm:he_bound}) be reduced. We summarize this into Principle \ref{prp:longest}.

\begin{principle}\label{prp:longest}
For distinct vertices $u$, $v$, if the length of the longest path does not increase after adding edge $(u, v)$, then edge $(u, v)$ can be added in the DAG task.
\end{principle}
To realize Principle \ref{prp:longest} and make it easier to implement, we introduce the following concepts.

\begin{definition}[Left Length] \label{def:left}
In DAG $G$, the left length $l(v)$ of vertex $v$ is the maximum length of paths that start from $v_{src}$ and end at $v$.
\end{definition}
\begin{definition}[Right Length] \label{def:right}
In DAG $G$, the right length $r(v)$ of vertex $v$ is the maximum length of paths that start from $v$ and end at $v_{snk}$.
\end{definition}

The left length and right length of vertices in a DAG can be easily computed by using a simple dynamic programming with time complexity $O(|V|+|E|)$ \cite{he2019intra}.
For example, in Fig. \ref{fig:dag_example}, for vertex $v_4$, $l(v_4)=5$, $r(v_4)=2$.

\begin{lemma}\label{lem:longest}
In DAG $G$, for vertex $u$, $v$ and $u \in para(v)$, we add edge $(u, v)$ which transforms $G$ to $G'$. If in $G$, $l(u)+r(v) \le len(G)$, then $len(G')=len(G)$.
\end{lemma}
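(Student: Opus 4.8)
The plan is to prove the two inequalities $len(G') \ge len(G)$ and $len(G') \le len(G)$ separately. The first is immediate: since $G'$ is obtained from $G$ only by adding the single edge $(u,v)$, every path of $G$ (in particular a longest path) is still a path of $G'$, so $len(G') \ge len(G)$. The entire content of the lemma is therefore the reverse inequality, which I would establish by bounding the length of an arbitrary complete path of $G'$.

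First I would record the structural fact that the hypothesis $u \in para(v)$ provides. By Definition~\ref{def:parallel_set}, $u \notin des(v)$, so there is no path from $v$ to $u$ in $G$; and adding $(u,v)$ cannot create a fresh $v$-to-$u$ path, since that would require one to exist already. This is exactly what keeps $G'$ acyclic, and, more importantly for the bound, it forbids the new edge $(u,v)$ from lying on any path that ends at $u$ or starts at $v$. Consequently the left length $l(u)$ and the right length $r(v)$ take the same values in $G$ and in $G'$.

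Next I would split on whether a longest complete path $\mypath'$ of $G'$ uses the new edge. If it does not, then all edges of $\mypath'$ belong to $G$, so $\mypath'$ is a complete path of $G$ and $len(\mypath') \le len(G)$. If it does use $(u,v)$, I would decompose $\mypath'$ at that edge into a prefix from $v_{src}$ to $u$ and a suffix from $v$ to $v_{snk}$. By the previous paragraph both segments are genuine paths in $G$, they are vertex-disjoint, and their lengths are bounded by $l(u)$ and $r(v)$ respectively. Hence $len(\mypath') = (\text{prefix length}) + (\text{suffix length}) \le l(u) + r(v) \le len(G)$ by hypothesis. Taking the maximum over all complete paths of $G'$ gives $len(G') \le len(G)$, and combining with the first bound yields $len(G') = len(G)$.

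The step I expect to require the most care is the decomposition argument: I must verify that the prefix and suffix are genuine $G$-paths, so that $l(u)$ and $r(v)$ (defined in $G$) are valid upper bounds, and that the two segments share no vertex, so that summing $l(u)$ and $r(v)$ does not double-count any WCET. Both points rest on the single observation that $u$ is not a descendant of $v$, which is precisely what $u \in para(v)$ guarantees; this is also the reason the edge can be added without introducing a cycle, tying the argument back to Principle~\ref{prp:para}.
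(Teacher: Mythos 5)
Your proof is correct and follows essentially the same route as the paper's: both split the complete paths of $G'$ into those avoiding the new edge (which are exactly the complete paths of $G$) and those using it (whose length is bounded by $l(u)+r(v) \le len(G)$ via the prefix/suffix decomposition). Your version merely makes explicit some details the paper leaves implicit, such as why the prefix and suffix are genuine $G$-paths and vertex-disjoint.
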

\begin{proof}
Let $\Pi$ and $\Pi'$ denote the set of complete paths in $G$ and $G'$, respectively. $\Pi'$ can be partitioned into two sets $\Pi'_1$, $\Pi'_2$ where $\Pi'_1$ is the set of complete paths that do not go through $(u, v)$, and $\Pi'_2$ is the set of complete paths that go through $(u, v)$. Obviously, $\Pi'_1= \Pi$, which means
\begin{equation}\label{equ:longest1}
\max_{\mypath \in \Pi'_1} \{len(\mypath)\} = \max_{\mypath \in \Pi} \{len(\mypath)\} = len(G)
\end{equation}
Since $l(u)+r(v) \le len(G)$, we have
\begin{equation}\label{equ:longest2}
\max_{\mypath \in \Pi'_2} \{len(\mypath)\} \le len(G)
\end{equation}
Combining (\ref{equ:longest1}) and (\ref{equ:longest2}), we reach the conclusion.
\end{proof}

Lemma \ref{lem:longest} will be used to implement Principle \ref{prp:longest} in Algorithm \ref{alg:first}.
Principle \ref{prp:longest} is to ensure that the response time bound in (\ref{thm:he_bound}) does not become larger. However, we want the response time bound to become smaller. Next, we introduce Principle \ref{prp:effective}.

\begin{definition}[Residue Graph \cite{he2022bounding}]\label{def:residue_graph}
Given a generalized path $\chain$ of DAG $G = (V, E)$, the residue graph $G_r = res(G, \chain)= (V, E)$ is defined as:
\begin{itemize}
  \item if $v \in \chain$, the WCET of $v$ in $G_r$ is $0$;
  \item if $v \in V \setminus \chain$, the WCET of $v$ in $G_r$ is $c(v)$.
\end{itemize}
\end{definition}

Note that different generalized paths in a generalized path list are disjoint. The residue graph is introduced to ensure that newly computed generalized paths have no common vertices with already computed generalized paths while maintaining the topology of the DAG task.
A residue graph is still a DAG.

\begin{principle}\label{prp:effective}
For distinct vertices $u$, $v$, if a longer generalized path can be identified in the residue graph after adding edge $(u, v)$, then edge $(u, v)$ can be added in the DAG task.
\end{principle}

Next, we introduce some concepts to realize Principle \ref{prp:effective}.
For a vertex $v$, the \emph{effective left length} $el(v)$ is the left length of $v$ in a residue graph $G_r$,
and the \emph{effective right length} $er(v)$ is the right length of $v$ in a residue graph $G_r$.

\begin{example}\label{exp:effective}
For the DAG $G$ in Fig. \ref{fig:dag_example} and a generalized path $\chain = (v_0, v_3, v_5)$, the residue graph $G_r= res(G, \chain)$ where the WCETs of $v_0$, $v_3$, $v_5$ are set to zero; edges and other vertices are unchanged.
In $G_r$, for vertex $v_4$, $el(v_4)=4$, $er(v_4)=1$.
\end{example}

\begin{lemma}\label{lem:effective}
In residue graph $G_r$, for vertex $u$, $v$ and $u \in para(v)$, we add edge $(u, v)$ which transforms $G_r$ to $G'_r$. If in $G_r$, $el(u)+er(v) > len(G_r)$, then $len(G'_r)>len(G_r)$.
\end{lemma}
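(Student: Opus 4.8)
The plan is to argue in the reverse direction of Lemma~\ref{lem:longest}: rather than bounding from above the paths created by the new edge, I would exhibit one concrete complete path of $G'_r$ whose length is exactly $el(u)+er(v)$, and then invoke the hypothesis $el(u)+er(v)>len(G_r)$ to force the strict inequality $len(G'_r)>len(G_r)$.

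Concretely, I would fix a longest path $\mypath_u$ from $v_{src}$ to $u$ in $G_r$, so that $len(\mypath_u)=el(u)$, and a longest path $\mypath_v$ from $v$ to $v_{snk}$ in $G_r$, so that $len(\mypath_v)=er(v)$. Since adding $(u,v)$ only introduces an outgoing edge at $u$ and an incoming edge at $v$, it affects neither the best $v_{src}$-to-$u$ prefix nor the best $v$-to-$v_{snk}$ suffix, so these two lengths are unchanged in $G'_r$. Concatenating $\mypath_u$, the edge $(u,v)$, and $\mypath_v$ then produces a complete path $\mypath^\ast$ of $G'_r$ with $len(\mypath^\ast)=el(u)+er(v)$, and since $len(G'_r)$ is the maximum complete-path length in $G'_r$, I obtain $len(G'_r)\ge len(\mypath^\ast)=el(u)+er(v)>len(G_r)$, as required.

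The step I expect to be the crux is showing that $\mypath^\ast$ is an actual (simple) path, i.e. that $\mypath_u$ and $\mypath_v$ share no vertex; otherwise the concatenation would not be valid and its length would not equal $el(u)+er(v)$. I would prove disjointness by contradiction using $u\in para(v)$. Every vertex on $\mypath_u$ lies in $anc(u)\cup\{u\}$ and every vertex on $\mypath_v$ lies in $des(v)\cup\{v\}$, so a shared vertex $w$ would be simultaneously an ancestor-or-equal of $u$ and a descendant-or-equal of $v$. Examining the four cases ($w=u=v$; $w=u\in des(v)$; $w=v\in anc(u)$; $w\in anc(u)\cap des(v)$) each yields either $u=v$ or a directed path between $u$ and $v$, hence $u\in anc(v)$ or $u\in des(v)$ --- which Definition~\ref{def:parallel_set} explicitly forbids for $u\in para(v)$. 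This contradiction establishes disjointness and completes the argument.

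Finally, I would note that the incomparability hypothesis $u\in para(v)$ plays a dual role matching Principle~\ref{prp:para}: it is exactly what makes the edge addable, and it is exactly what guarantees the witness path $\mypath^\ast$ is simple. The whole argument is thus the natural dual of Lemma~\ref{lem:longest}, replacing its upper bound on newly created paths with the construction of a single provably longer one.
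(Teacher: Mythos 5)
Your proof is correct and takes essentially the same route as the paper's: both exhibit the witness path obtained by concatenating a longest $v_{src}$-to-$u$ path, the new edge $(u,v)$, and a longest $v$-to-$v_{snk}$ path, and then conclude $len(G'_r) \ge el(u)+er(v) > len(G_r)$. The only difference is that you explicitly verify the two pieces are vertex-disjoint using $u \in para(v)$, a detail the paper leaves implicit; this also follows directly from acyclicity of $G'_r$ (which Principle~\ref{prp:para} guarantees), since a repeated vertex in the concatenated walk would create a cycle.
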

\begin{proof}
In $G_r$, let the path that starts from $v_{src}$ and ends at $u$ and whose length reaches $el(u)$ be $\mypath_1$. Let the path that starts from $v$ and ends at $v_{snk}$ and whose length reaches $er(v)$ be $\mypath_2$. Since edge $(u, v)$ is in $G'_r$, we have $\mypath \coloneqq \mypath_1 \cup \mypath_2$ is a path in $G'_r$ and $len(\mypath)=el(u)+er(v) > len(G_r)$.
Therefore, $len(G'_r)>len(G_r)$.
\end{proof}

Lemma \ref{lem:effective} will be used to implement Principle \ref{prp:effective} in Algorithm \ref{alg:first}.
With these three principles, we present our method of adding edges in Algorithm \ref{alg:framework} and Algorithm \ref{alg:first}.

\begin{algorithm}[t]
    \caption{The Method Framework}\label{alg:framework}
    \DontPrintSemicolon
    \Input{DAG task $G = (V, E)$}
    \Output{DAG task $G$ with added edges}
     $G_r \leftarrow G$; $i \leftarrow 0$ \\
     \While{$vol(G_r) \neq 0$}{
        $\chain_i \leftarrow$ the longest path of $G_r$ \\
        \If{$AddEdge(\chain_i)$}{
            \Continue
        }
        $\chain_i \leftarrow \chain_i \setminus \{v \in \chain_i| c(v)\ \mathrm{of}\ G_r\ \mathrm{is}\ 0 \}$ \\
        $G_r \leftarrow res(G_r, \chain_i)$; $i \leftarrow i+1$ \\
     }
\end{algorithm}

\begin{algorithm}[t]
    \caption{$AddEdge(\chain)$}\label{alg:first}
    \DontPrintSemicolon
    \Input{a generalized path $\chain$}
    \Output{\emph{true}: add an edge; \emph{false}: not add an edge}
    \ForEach{$v \in \chain$}{
        \ForEach(\tcp*[f]{Principle \ref{prp:para}}){$u \in para(v)$}{
            \If(\tcp*[f]{Principle \ref{prp:longest}}){$l(u)+r(v) \le len(G)$}{
                \If(\tcp*[f]{Principle \ref{prp:effective}}){$el(u)+el(v) > len(G_r)$}{
                    add edge $(u, v)$ in $G$ and $G_r$ \\
                    \Return{true}
                }
            }
        }
    }
    \Return{false}
\end{algorithm}

Algorithm \ref{alg:framework} follows the same guideline as in Algorithm 2 of \cite{he2022bounding}: it computes the longest generalized path in the residue graph one by one (Line 3), and sets the WCETs of vertices in these generalized paths to zero to avoid joint vertices among different generalized paths (Line 7, 8) until the volume of the residue graph is zero (Line 2).
Whenever a longest generalized path is computed in Line 3, the \emph{AddEdge} procedure is called regarding this generalized path (Line 4). If an edge is successfully added, the topology of the DAG task changes and the longest generalized path should be recomputed (Line 3).

The \emph{AddEdge} procedure (Algorithm \ref{alg:first}) performs the task of adding edge and realizes the three proposed principles. It takes a generalized path $\chain$ as the input. For each vertex $v$ in this generalized path (Line 1), for each vertex $u$ in the parallel set of $v$ (Line 2, this line corresponds to Principle \ref{prp:para}), we check whether edge $(u, v)$ can be added or not. We first check whether it conforms to Principle \ref{prp:longest} using Lemma \ref{lem:longest} (Line 3), second check Principle \ref{prp:effective} using Lemma \ref{lem:effective} (Line 4). If edge $(u, v)$ passes these checks, we add this edge in the DAG task (Line 5).
If we cannot make generalized path $\chain$ longer by adding edges, which means that it is a ``good'' generalized path to compute the response time bound for the DAG task, we indicate that there is no need to add edges and return \emph{false} (Line 11).


\textbf{Complexity.}
There are two loops in Algorithm \ref{alg:first} and each of them can run at most $|V|$ times. Therefore, the time complexity of Algorithm \ref{alg:first} is $O(|V|^2)$.
For Algorithm \ref{alg:framework}, there are two independent loops: the loop in Line 2-5 and the loop in Line 2-9.
For each iteration of the first loop, an edge is added. Since a DAG can have at most $|V|^2$ edges, the first loop can run no more than $|V|^2$ times.
For each iteration of the second loop, a generalized path is identified, which includes at least one vertex. The second loop can run no more than $|V|$ times.
Together with the $O(|V|^2)$ of Algorithm \ref{alg:first}, the time complexity of Algorithm \ref{alg:framework} is $O(|V|^4)$.

For the original DAG task $G$, after Algorithm \ref{alg:framework}, a DAG with added edges (denoted as $G'$) is computed.
Also, in Algorithm \ref{alg:framework}, during each iteration, a generalized path $\chain_i$ is computed in Line 7. These generalized paths form a generalized path list $(\chain_i)_0^k$, which can be used to compute the response time bound for DAG $G'$ using (\ref{equ:he_bound}).
Note that in Theorem \ref{thm:he_bound}, the sole requirement that a generalized path list can be used to compute the bound in (\ref{equ:he_bound}) is that the first generalized path is the longest path of the DAG task.

\begin{theorem}\label{thm:domination}
The method in Algorithm \ref{alg:framework} and Algorithm \ref{alg:first} dominates the method in \cite{he2022bounding}, i.e.,
\begin{equation}\label{equ:domination}
R(G') \le R(G)
\end{equation}
\end{theorem}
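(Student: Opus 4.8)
The plan is to compare the two bounds through the common quantities $len$ and $vol$, after observing that the baseline's path list is itself admissible for $G'$. First I would record two invariances. Since adding edges never changes any WCET, $vol(G')=vol(G)$; and since every edge $(u,v)$ inserted by Algorithm~\ref{alg:first} passes the test of Line~3, Lemma~\ref{lem:longest} applies to each insertion, so by an induction over the sequence of inserted edges the longest-path length is preserved at every step, giving $len(G')=len(G)$. Write $L\coloneqq len(G)=len(G')$ and $W\coloneqq vol(G)=vol(G')$.

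Next I would show that the list produced by the baseline (the method of~\cite{he2022bounding} run on $G$), say $(\hat{\chain}_i)_0^{\hat k}$, is a valid generalized path list for $G'$ with the same bound value. Adding edges only creates new ancestor--descendant relations, so every generalized path of $G$ is a generalized path of $G'$, and disjointness is inherited unchanged; moreover $\hat{\chain}_0$ is a complete path of length $L=len(G')$, hence still a longest path of $G'$. Because the per-vertex WCETs, $L$, $W$ and $m$ are all identical in $G$ and $G'$, evaluating (\ref{equ:he_bound}) on $G'$ with $(\hat{\chain}_i)_0^{\hat k}$ returns exactly $R(G)$. Thus $R(G)$ is an achievable bound value for $G'$, and it remains to show that the list $(\chain_i)_0^{k}$ actually produced by Algorithm~\ref{alg:framework} does at least as well on $G'$.

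With $L,W,m$ fixed, the $j$-th term of (\ref{equ:he_bound}) is strictly decreasing in the prefix sum $S_j\coloneqq\sum_{i=0}^{j} len(\chain_i)$, so the argument reduces to the prefix-sum domination $S_j \ge \hat S_j$ for every $j$ in the range of $(\chain_i)_0^k$, where $\hat S_j\coloneqq\sum_{i=0}^{j} len(\hat{\chain}_i)$. Granting this, let $j^{*}$ minimize the baseline expression. If $j^{*}\le k$, then the $j^{*}$-th term for $G'$ is no larger than the $j^{*}$-th term for $G$, so $R(G')\le R(G)$. Otherwise $k<j^{*}\le m-1$, and since the paths of Algorithm~\ref{alg:framework} are disjoint and cover $V$ we have $S_k=W$, making the $k$-th term equal to $L$; as every baseline term is at least $L$, again $R(G')\le R(G)$. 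Either way the claim follows.

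The main obstacle is therefore the prefix-sum domination $S_j\ge\hat S_j$, equivalently $vol(U_j)\ge vol(\hat U_j)$, where $U_j,\hat U_j$ are the vertex sets covered by the first $j{+}1$ paths of the two runs. The difficulty is that the two greedy extractions diverge as soon as the first edge is inserted: they zero out different vertices and operate on different edge sets, so a naive step-by-step coupling breaks after the first shared iteration. The route I would pursue is to read $S_j$ and $\hat S_j$ as the weights of unions of $j{+}1$ disjoint generalized paths in $G'$ and in $G$ respectively; writing $\lambda_{j+1}(\cdot)$ for the maximum weight of such a union, edge-preservation gives the monotonicity $\lambda_{j+1}(G')\ge\lambda_{j+1}(G)\ge \hat S_j$, so it would suffice to show that the extraction of Algorithm~\ref{alg:framework} attains $\lambda_{j+1}(G')$. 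This is exactly where Principles~\ref{prp:longest}--\ref{prp:effective} must enter: each inserted edge keeps the longest path at length $L$ (Lemma~\ref{lem:longest}) while, by Lemma~\ref{lem:effective}, strictly lengthening the current residue path, so the extension never trades away covered volume. I expect the delicate point to be that plain longest-path greedy need not reach $\lambda_{j+1}$ under unfavourable tie-breaking; the proof must lean on the residue-graph reuse together with the \emph{AddEdge} extensions to rule this out, and this is the crux of the whole argument.
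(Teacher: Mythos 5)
Your first two paragraphs are, in substance, the paper's entire proof: it establishes $vol(G')=vol(G)$ and, via Lemma \ref{lem:longest}, $len(G')=len(G)$, then observes that the baseline's generalized path list is still admissible for $G'$ (generalized paths and disjointness survive edge addition, and its first path is still a longest path of $G'$), and from this it directly concludes that $R(G')$ ``cannot be larger than'' $R(G)$, with only a passing appeal to Lemma \ref{lem:effective}. The step you then single out --- that the list actually output by Algorithm \ref{alg:framework}, which is the list the method uses to evaluate (\ref{equ:he_bound}) on $G'$, must do at least as well as the baseline's list --- is precisely what the paper never proves. Your reduction of that step to the prefix-sum domination $S_j \ge \hat{S}_j$ is sound, but you leave the domination itself unestablished, so your attempt is incomplete.

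Unfortunately, the missing step cannot be supplied, because it is false; and with it, inequality (\ref{equ:domination}) fails as literally stated (i.e., with $R(G')$ evaluated on the algorithm's own list). Let $m=3$ and let $G$ consist of three independent chains between the dummy source and sink: a vertex $a$ with $c(a)=10$, a chain $x_1\to x_2$ with WCETs $4$ and $4.1$, and a chain $y_1\to y_2$ with WCETs $5$ and $3$, so $vol(G)=26.1$ and $len(G)=10$. The baseline extracts paths of lengths $(10,\,8.1,\,8)$ and gets $R(G)=10+(26.1-26.1)/1=10$. Algorithm \ref{alg:framework} first extracts $a$; it then picks the chain $x_1\to x_2$, and Algorithm \ref{alg:first} inserts the edge $(y_1,x_2)$, which passes both tests since $l(y_1)+r(x_2)=9.1\le 10$ and $el(y_1)+er(x_2)=9.1>8.1=len(G_r)$. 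The committed paths then have lengths $(10,\,9.1,\,4,\,3)$: the insertion fragments the chain $y_1\to y_2$ that the baseline would have harvested whole. Since Theorem \ref{thm:he_bound} admits at most $m$ paths, the best term for $G'$ is $j=2$, giving $R(G')=10+(26.1-23.1)/1=13>10=R(G)$. So your instinct about where the difficulty lies is exactly right, no argument via $\lambda_{j+1}$ (or any other argument) can close it, and the only correct repair is the one implicit in your second paragraph: because the baseline list remains admissible for $G'$ and yields the same numerical bound there, the method must additionally evaluate (\ref{equ:he_bound}) on the baseline list and report the smaller of the two values; under that reading of $R(G')$, your first two paragraphs --- and the paper's proof --- already constitute a complete argument.
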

\begin{proof}
We prove it by examining the items of (\ref{equ:he_bound}).
Since the WCETs of vertices in $G'$ and $G$ are the same, $vol(G')=vol(G)$.
Although we add edges in $G'$, by Lemma \ref{lem:longest}, $len(G')=len(G)$.
For a generalized path list $(\chain_i)_0^k$ of $G$ to compute $R(G)$, since $\chain_0$ is the longest path of $G$ and $len(G')=len(G)$, $\chain_0$ is also the longest path of $G'$, which means that $(\chain_i)_0^k$ can also be used to compute $R(G')$.
Therefore, $R(G')$ cannot be larger than $R(G)$.
What's more, by Lemma \ref{lem:effective}, after adding an edge, it is ensured that there is a longer generalized path in the residue graph (see Line 4 of Algorithm \ref{alg:first}).
In summary, the conclusion is reached.
\end{proof}

\begin{example}\label{exp:perserve}
Let the number of cores $m=2$.
For the DAG task $G$ in Fig. \ref{fig:dag_example}, the computed bound is 7. After Algorithm \ref{alg:framework} and Algorithm \ref{alg:first}, $G$ is transformed into $G'$ shown in Fig. \ref{fig:dag_example2}. The computed bound is 6, less than 7. This example is explained in Section \ref{sec:motivation_second}.
\end{example}

\section{Real-Time Scheduling of DAG Tasks}
\label{sec:extension}
This section considers the scheduling of a task set.
In the task set, each sporadic parallel real-time task is specified as a tuple $(G, D, T)$, where $G$ is the DAG task model in Section \ref{sec:task}, $D$ is the relative deadline and $T$ is the period. We consider constrained deadline, i.e., $D \le T$.
The scheduling algorithm is the widely-used federated scheduling paradigm \cite{li2014analysis}.
In federated scheduling paradigm, parallel tasks are divided into two categories: the heavy tasks (tasks with $vol(G) \ge D$) and the light tasks (tasks with $vol(G) < D$).
Each heavy task is assigned and executed exclusively on a set of cores under a work-conserving scheduler.
All light tasks are treated as sequential sporadic tasks and are scheduled on the remaining cores by sequential multiprocessor scheduling algorithms such as global EDF \cite{baruah2007techniques} or partitioned EDF \cite{baruah2005partitioned}.

In federated scheduling paradigm, to apply the proposed technique, the only remaining question is how to decide the number of cores $m$ required by a heavy task such that its deadline can be satisfied.
In the following, Section \ref{sec:increase} develops a technique to optimize the required number of cores for a DAG task. And Section \ref{sec:task_set} presents the scheduling approach using the techniques in Section \ref{sec:method} and Section \ref{sec:increase}.

\subsection{Optimizing the Number of Cores}
\label{sec:increase}  

Section \ref{sec:method} considers how to optimize the response time bound for a DAG task given the number of cores.
In contrast, Section \ref{sec:extension} considers computing resource allocation. So the target of this subsection is to optimize the number of cores, instead of the response time bound.

Section \ref{sec:method} enforces that the longest path cannot be increased (i.e., Principle \ref{prp:longest}) when adding edges. Given the number of cores, to reduce the response time bound, this is a reasonable principle, since the length of the longest path has a large impact on the worst-case response time.
However, given the deadline, to reduce the number of cores required by a DAG task, it is possible that lifting the constraint of Principle \ref{prp:longest} can achieve better computing resource allocations.

\textbf{Motivation.}
We illustrate this using the DAG task $G$ in Fig. \ref{fig:dag_example3}. Let the deadline $D=7$.
In $G$, a generalized path list $(\chain_i)_0^2$ where $\chain_0=(v_0, v_1, v_4, v_5)$, $\chain_1=(v_3)$, $\chain_2=(v_2)$ can be identified. Since there are three generalized paths in $G$, if the allocated number of cores $m=3$, the worst-case response time will be the length of the longest path in $G$, which is 6.
If $m=2$, using this generalized path list, the response time bound of $G$ in (\ref{equ:he_bound}) can be computed as
$R(G) =\min\{6+(11-6)/2, 6+(11-6-3)/(2-1)\}=\min\{8.5, 8\}=8$,
which is larger than the deadline.
Also, due to Principle \ref{prp:longest}, no edges can be added using the method in Section \ref{sec:method}.
Therefore, to guarantee that the deadline is satisfied, the allocated number of cores $m$ should be 3.

However, if without Principle \ref{prp:longest}, we can add edge $(v_2, v_3)$ and transforms $G$ into $G'$ shown in Fig. \ref{fig:dag_example4}. In $G'$, a generalized path list $(\chain_i)_0^1$ where $\chain_0=(v_0, v_2, v_3, v_5)$, $\chain_1=(v_1, v_4)$ can be identified.  Since there are two generalized paths in $G'$, if the allocated number of cores $m=2$, the worst-case response time will be the length of the longest path in $G'$, which is 7, no larger than the deadline.
Therefore, if without Principle \ref{prp:longest}, to guarantee that the deadline is satisfied, the allocated number of cores $m$ can be 2.
In this example, by lifting the constraint of Principle \ref{prp:longest}, the allocated number of cores is reduced without compromising the hard real-time requirements.

\begin{figure}[t]
\centering
\subfloat[DAG task $G$]{
    \includegraphics[width=0.4\linewidth]{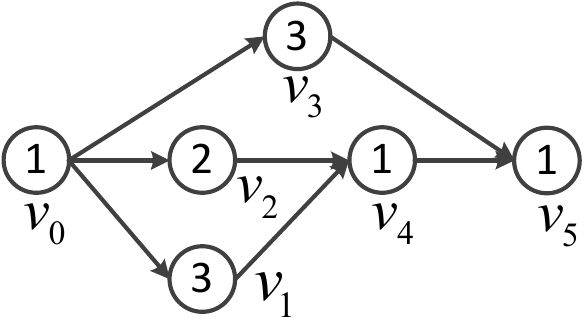}
    \label{fig:dag_example3}
}
\hfil
\subfloat[DAG $G'$ with edge $(v_2, v_3)$]{
    \quad
    \includegraphics[width=0.4\linewidth]{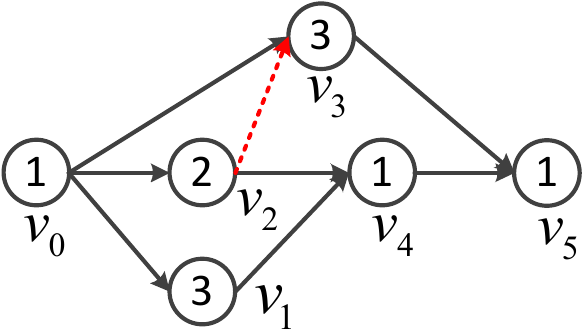}
    \label{fig:dag_example4}
    \quad
}
\caption{An illustrative example for Section \ref{sec:increase}.}
\label{fig:increase}
\end{figure}

\textbf{The Proposed Technique.} 
Therefore, to reduce the allocated number of cores, the technique of adding edges without Principle \ref{prp:longest} is proposed.
We introduce a variable called \emph{limit}, which is an upper bound of the length of the longest path when adding edges.
The condition of the \emph{if} statement in Line 3 of Algorithm \ref{alg:first} is replaced with
\begin{equation}\label{equ:increase}
l(u)+r(v) \le \emph{limit}
\end{equation}
Others in Algorithm \ref{alg:framework} and Algorithm \ref{alg:first} are unchanged.
It can be seen that the meaningful range of \emph{limit} is $[len(G), vol(G)]$. If $\emph{limit}=len(G)$, this technique is the method in Section \ref{sec:method}; if $\emph{limit}=vol(G)$, the DAG task will be transformed into a sequential task.

\textbf{The Proposed Technique with Given Deadline.} 
If the deadline $D$ of the DAG task is provided, since the only requirement of real-time scheduling is to ensure that the task can finish before its deadline, we let the condition of the \emph{if} statement in Line 3 of Algorithm \ref{alg:first} be
\begin{equation}\label{equ:deadline}
l(u)+r(v) \le D
\end{equation}
Others in Algorithm \ref{alg:framework} and Algorithm \ref{alg:first} are unchanged.
Recall that during each iteration of Algorithm \ref{alg:framework}, a generalized path $\chain_i$ is computed in Line 7.
So Algorithm \ref{alg:framework}, with Line 3 modified according to (\ref{equ:deadline}), will also output a generalized path list $(\chain_i)_0^k$.

\begin{lemma}\label{lem:increase}
Let $(\chain_i)_0^k$ denote the generalized path list computed by Algorithm \ref{alg:framework} with (\ref{equ:deadline}). If the allocated number of cores $m=k+1$, the DAG task will finish before its deadline.
\end{lemma}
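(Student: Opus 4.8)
The plan is to combine two properties of the generalized path list $(\chain_i)_0^k$ produced by Algorithm \ref{alg:framework} with condition (\ref{equ:deadline}): that it covers the entire workload of $G'$, and that every edge it inserts keeps the longest path no longer than $D$. With these, I would instantiate Theorem \ref{thm:he_bound} at $m=k+1$ and read off the bound at the index $j=k$, where the remaining-work term collapses to zero.

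First I would establish the covering property. Algorithm \ref{alg:framework} iterates until $vol(G_r)=0$, and a vertex's residue-graph WCET is zeroed only when it is placed into some $\chain_i$ (Line 7--8); together with the trimming this makes $(\chain_i)_0^k$ a partition of $V$ into disjoint generalized paths. Since the added edges change no WCET, $vol(G')=vol(G)$ and hence $\sum_{i=0}^{k} len(\chain_i)=vol(G')$. Next I would show $len(G')\le D$ by induction over the edges added in Line 5: whenever edge $(u,v)$ is inserted, (\ref{equ:deadline}) forces $l(u)+r(v)\le D$ in the current graph, so by exactly the case split in the proof of Lemma \ref{lem:longest} (complete paths through $(u,v)$, bounded by $l(u)+r(v)\le D$, versus those avoiding it, which are unchanged) the longest path stays at most $\max\{D,\, len(\text{current } G)\}=D$, with the necessary base case $len(G)\le D$.

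With both facts in hand, applying Theorem \ref{thm:he_bound} at $j=k$ gives
\begin{equation*}
R(G') \le len(G') + \frac{vol(G') - \sum_{i=0}^{k} len(\chain_i)}{(k+1)-k} = len(G') \le D,
\end{equation*}
and since the true response time is upper-bounded by $R(G')$, the task finishes by its deadline. Here $k=m-1\in[0,m-1]$, so the index range of the theorem is respected, and the paths are disjoint by construction.

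The step I expect to be the main obstacle is the single hypothesis of Theorem \ref{thm:he_bound} that is not automatic here: that the first generalized path is the longest path of $G'$. The chain finalized first is the longest path of the \emph{original} $G$, yet later insertions (now allowed up to length $D$ rather than $len(G)$) can create a strictly longer complete path, so $\chain_0$ need not be the longest path of $G'$. The covering identity is independent of ordering, so the $j=k$ term equals $len(G')$ for \emph{any} complete chain partition; it would therefore suffice to exhibit one whose first chain is the longest path of $G'$, and arguing that such a partition of size at most $k+1$ exists is the delicate point. The cleanest route I would take bypasses the explicit bound: the partition into $k+1$ chains shows the maximum antichain of $G'$ has at most $k+1$ vertices, and at every instant the pending vertices (predecessors finished, itself unfinished) form an antichain; hence at most $k+1=m$ vertices are ever pending, no pending vertex waits for a core, and the makespan of $G'$ is exactly $len(G')\le D$.
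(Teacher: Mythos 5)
Your core derivation is exactly the paper's own proof: you establish the covering identity $\sum_{i=0}^{k} len(\chain_i)=vol(G')$ from the termination condition $vol(G_r)=0$ of Algorithm \ref{alg:framework}, show $len(G')\le D$ by applying the case split in the proof of Lemma \ref{lem:longest} to each edge admitted under (\ref{equ:deadline}), and then evaluate (\ref{equ:he_bound}) at $j=k$ with $m=k+1$ so that the interference term vanishes, giving $R(G')=len(G')\le D$. Up to that point the two arguments coincide (the paper also leaves the base case $len(G)\le D$ implicit; as you note it is needed, and it is harmless since a task with $len(G)>D$ is unschedulable anyway).

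Where you depart from the paper is precisely where the paper is weakest, and your concern is not hypothetical but real. The paper invokes Theorem \ref{thm:he_bound} on $G'$ with the algorithm's output list without verifying the hypothesis that $\chain_0$ is the longest path of $G'$; under (\ref{equ:deadline}) this can fail. Concretely, take $s,t$ with zero WCET, $A_1,A_2$ with WCET $10$, $B$ with WCET $8$, $C$ with WCET $9$, edges $s\to A_1\to A_2\to t$, $A_1\to B\to t$, $s\to C\to t$, and $D=27$. The algorithm finalizes $\chain_0=(A_1,A_2)$ of length $20$ (every candidate edge violates (\ref{equ:deadline}) or the test in Line 4 of Algorithm \ref{alg:first}), but while processing the next chain $(s,C,t)$ it adds the edge $(B,C)$, since $l(B)+r(C)=27\le D$ and $el(B)+er(C)=17>9$; this creates the complete path $(s,A_1,B,C,t)$ of length $27>20$, so the first chain of the output list is not the longest path of $G'$ and Theorem \ref{thm:he_bound} cannot be cited as stated. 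Your repair is also the right one: re-deriving a covering list headed by the longest path of $G'$ is indeed delicate --- for a general DAG admitting a partition into $k+1$ chains, a $(k+1)$-chain partition containing the longest path need not even exist --- whereas your antichain argument is self-contained and closes the gap. The output chains partition the positive-WCET vertices of $G'$ into $k+1$ chains, so every antichain has at most $k+1=m$ such vertices; the pending vertices at any instant form an antichain, hence under any work-conserving scheduler no eligible vertex ever waits, every vertex finishes by its left length, and the makespan is at most $len(G')\le D$. (One small refinement: state the antichain bound for positive-WCET vertices only, since Line 7 of Algorithm \ref{alg:framework} trims zero-WCET vertices out of the chains; such vertices occupy no core for positive time, so the argument is unaffected.) In short, your proposal follows the paper's route but detects, and correctly repairs, a genuine gap in the paper's own justification.
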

\begin{proof}
Let $G$ denote the DAG task, and $G'$ denote the DAG after Algorithm \ref{alg:framework} with Line 3 modified according to (\ref{equ:deadline}).
Since adding edges does not change the volume, we have $vol(G)=vol(G')$.
By Line 2 of Algorithm \ref{alg:framework}, since the loop iterates until the volume of the residue graph reaches zero, we have $vol(G)=\sum_{i=0}^{k} len(\chain_i)$.
By (\ref{equ:he_bound}),
\begin{align*}
R(G') &= \min \limits_{j \in [0, k]} \left\{ len(G')+\frac{vol(G')-\sum_{i=0}^{j} len(\chain_i)}{m-j} \right\} \\
      &= len(G')+\frac{vol(G)-\sum_{i=0}^{k} len(\chain_i)}{m-k} = len(G')
\end{align*}
By (\ref{equ:deadline}) and the proof of Lemma \ref{lem:longest}, we have $len(G') \le D$.
Therefore, $R(G') \le D$.
\end{proof}

\subsection{The Scheduling Approach}
\label{sec:task_set}
This subsection discusses in the proposed scheduling approach, how to compute the number of cores allocated to a DAG task.
For a DAG task $G$, \cite{he2022bounding} presented a method to compute the allocated number of cores such that the deadline is guaranteed (Theorem 3 of \cite{he2022bounding}).
Let $Path(G)$ denote this method in \cite{he2022bounding}, which takes a DAG $G$ as input and outputs the number of cores $m$.
The technique in Section \ref{sec:method} transforms the DAG task $G$ into another DAG $G'$.
Using the method of \cite{he2022bounding}, we can compute a valid number of cores $Path(G')$.

The technique in Section \ref{sec:increase} also transforms the DAG task $G$ into another DAG $G''$ and computes the allocated number of cores such that the deadline is guaranteed (see Lemma \ref{lem:increase}).
Let $Edge(G)$ denote this method, which takes a DAG $G$ as input and outputs the number of cores $m$.

In the proposed scheduling approach, for a DAG task $G$, the allocated number of cores $m$ is computed as
\begin{equation}\label{equ:task_set}
m= \min \{Path(G'), Edge(G)\}
\end{equation}
Since for a DAG task, both allocated numbers of cores (i.e., $Path(G')$, $Edge(G)$) can guarantee that the deadline is satisfied, we can safely use the smaller one as the final number of cores allocated to the DAG task.

\begin{theorem}\label{thm:task_set}
For scheduling a task set, the approach in Section \ref{sec:extension} dominates the approach in \cite{he2022bounding} in the sense that if \cite{he2022bounding} can schedule a task set, our approach can schedule this task set.
\end{theorem}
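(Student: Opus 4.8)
The plan is to reduce the set-level dominance claim to a per-task comparison of dedicated cores, followed by a monotonicity argument for the light tasks. First I would note that both approaches run federated scheduling with the \emph{same} heavy/light partition: a task is heavy exactly when $vol(G) \ge D$, and since neither of the two transformations used by our approach (the edge insertion of Section \ref{sec:method} producing $G'$, and the one of Section \ref{sec:increase} producing the DAG behind $Edge(G)$) alters $vol(G)$, every task is classified identically under \cite{he2022bounding} and under us. Hence it suffices to compare, task by task, the number of cores dedicated to each heavy task, and then to handle the light tasks collectively on whatever cores remain.

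The core step is to prove $\min\{Path(G'), Edge(G)\} \le Path(G)$ for every heavy task $G$. Since $\min\{Path(G'), Edge(G)\} \le Path(G')$, it is enough to show $Path(G') \le Path(G)$, where $Path(\cdot)$ returns the least core count for which the method of \cite{he2022bounding} certifies the deadline. I would invoke Theorem \ref{thm:domination}, whose underlying mechanism is exactly what I need: for any fixed core count $m$, the bound (\ref{equ:he_bound}) evaluated on $G'$ is no larger than on $G$, because $G'$ preserves both $vol$ and $len$, and any generalized path list valid for $G$ remains valid for $G'$ (edge addition only creates ancestor relations, never destroys them, so generalized paths and their disjointness survive, $\chain_0$ stays the longest path, and (\ref{equ:he_bound}) takes identical inputs). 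Taking $m = Path(G)$, the list that certifies $G$ at $m$ cores certifies $G'$ at $m$ cores, so $Path(G') \le Path(G)$. Both $Path(G')$ (via \cite{he2022bounding}) and $Edge(G)$ (via Lemma \ref{lem:increase}) individually guarantee the deadline, so their minimum in (\ref{equ:task_set}) is a valid, deadline-safe allocation that never exceeds $Path(G)$.

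Next I would lift this per-task inequality to the whole task set. Summing over all heavy tasks, the total number of cores our approach dedicates to heavy tasks is at most the total that \cite{he2022bounding} dedicates; consequently the pool of cores left for the light tasks under our approach is at least as large as under \cite{he2022bounding}. Both approaches schedule the light tasks in exactly the same way, namely as sequential sporadic tasks under the same multiprocessor algorithm (partitioned or global EDF) with the same sufficient test. Since that test is monotone (non-worsening) in the number of available cores, a positive verdict on the cores left by \cite{he2022bounding} remains positive on the weakly larger pool left by us. Combining heavy-task feasibility (each heavy task meets its deadline on its dedicated cores, by construction of both $Path(G')$ and $Edge(G)$) with light-task feasibility yields schedulability of the whole set whenever \cite{he2022bounding} schedules it.

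The step I expect to be the main obstacle is the translation of Theorem \ref{thm:domination}, a statement about a single fixed $m$, into the core-count inequality $Path(G') \le Path(G)$: the argument must guarantee that the generalized path list underlying $Path(G)$ is genuinely reusable for $G'$ at the \emph{same} $m$ with an unchanged value of (\ref{equ:he_bound}), which is precisely where the ``edge addition preserves generalized paths and their disjointness'' observation does the real work. A secondary delicacy is the light-task argument, which needs that the chosen sequential multiprocessor schedulability test is not worsened by having more cores available; this is immediate for partitioned EDF and is the standard sustainability-in-core-count assumption for the global-EDF test used here.
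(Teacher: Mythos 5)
Your proof is correct and takes essentially the same route as the paper's: the heart of both arguments is the per-task chain $\min\{Path(G'), Edge(G)\} \le Path(G') \le Path(G)$, where the second inequality follows from Theorem \ref{thm:domination} by reusing, at the core count certified for $G$, the generalized path list (with $vol$, $len$, and disjointness preserved under edge addition) to certify $G'$. The paper's own proof is terser—it leaves implicit the invariance of the heavy/light classification and the monotonicity of the light-task test in the number of remaining cores—so your extra detail fills in what the paper glosses over rather than constituting a different approach.
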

\begin{proof}
For an arbitrary task $G$ in this task set, let $G'$ denote the DAG produced by the technique in Section \ref{sec:method}.
By Theorem \ref{thm:domination}, $R(G') \le R(G)$, which means that for a deadline $D$, the number of cores required by $G'$ is no larger than that of $G$.
Therefore, $Path(G') \le Path(G)$, which means that the computed number of cores $m$ in (\ref{equ:task_set}) is no larger than $Path(G)$.
The conclusion is reached.
\end{proof}

Note that both the proposed scheduling approach and the original federated scheduling \cite{li2014analysis} belong to the federated scheduling paradigm, where heavy tasks are allocated and executed exclusively on a set of cores.
The critical difference among various federated scheduling approaches lies in the method of computing resource allocation.
The only difference between our approach and the approach in \cite{li2014analysis} also lies in resource allocation, i.e., the number of cores allocated to heavy tasks.
Compared to federated scheduling approaches such as \cite{li2014analysis, he2022bounding}, benefiting from the proposed techniques that reduce the number of cores for heavy tasks, our scheduling approach significantly improves the system schedulability (see Section \ref{sec:evaluation_set}).
Compared to federated scheduling approaches such as \cite{jiang2017semi, ueter2018reservation, jiang2021virtually}, where heavy tasks can share computing resources to some extent, not only the performance of the proposed approach is better, but also the implementation of the proposed approach is much easier, since in our approach there are not sophisticated policies for heavy tasks to share computing resources.

\section{Evaluation}
\label{sec:evaluation}
This section evaluates the performances of the proposed methods for scheduling one task and scheduling task sets.

\subsection{Evaluation of Scheduling One Task}
\label{sec:evaluation_one}

This subsection evaluates the response time bounds of one DAG task using the following methods.
\begin{itemize}
    \item \textsf{PATH}. The method in \cite{he2022bounding}, shown in Theorem \ref{thm:he_bound}.
    \item \textsf{OUR}. Our method presented in Section \ref{sec:method}.
\end{itemize}
The bound in \cite{he2022bounding} is the state-of-the-art regarding scheduling a DAG task under a work-conserving scheduler on an identical multi-core platform.
Both bounds are normalized with respect to Graham's bound in \cite{graham1969bounds} to compare the performances.

\textbf{Task Generation.}
The DAG tasks are generated using the Erd\"os-R\'enyi method \cite{cordeiro2010random}. The number of vertices $|V|$ is randomly chosen in a specified range. For each pair of vertices, it generates a random value in $[0, 1]$ and adds an edge to the graph if the generated value is less than a predefined \emph{parallelism factor} $\mathit{pf}$. The larger $\mathit{pf}$, which means that there are more edges, the more sequential the graph is. After the vertices and edges are generated, the WCET of each vertex is randomly chosen in a specified range.
The default settings for generating DAG tasks are as follows.
The WCETs of vertices $c(v)$, the vertex number $|V|$, and the parallelism factor $\mathit{pf}$ are randomly and uniformly chosen in $[50, 100]$, $[50, 250]$ and $[0, 0.5]$, respectively.
For each data point in Fig. \ref{fig:evaluation_one}, we randomly generate 500 tasks to compute the average normalized bound.

The experiment results are reported in Fig. \ref{fig:evaluation_one}.
Fig. \ref{fig:m} shows the results of changing the number of cores on which the DAG task is scheduled.
When $m=2$, since fewer vertices can execute in parallel, the execution of DAG tasks is more sequential.
Therefore, both bounds are close to the volume of the task. This is the reason why the data points of both bounds in Fig. \ref{fig:m} are relatively close to each other when $m=2$.
When $m$ gets larger gradually (i.e., $m \in [3, 7]$), the technique of adding edges, being able to construct longer paths and reduce the interference to the critical path, becomes more effective in reducing the bound.
This explains why the data points of \textsf{OUR} in Fig. \ref{fig:m} become decrease when $m \le 4$.
When $m$ is close to 10, since more vertices can execute immediately once being released, the response time will approach the length of the longest path. Therefore, both bounds are close to the length of the longest path.
This explains why the data points of \textsf{OUR} in Fig. \ref{fig:m} become increase when $m \ge 4$.
Compared to \textsf{PATH}, our method can reduce the normalized bound by 21.6\% when $m=4$.
We choose $m=4$ as a representative for the following two experiments.

\begin{figure}[t]
\centering
\subfloat[core number]{
    \includegraphics[width=0.49\linewidth]{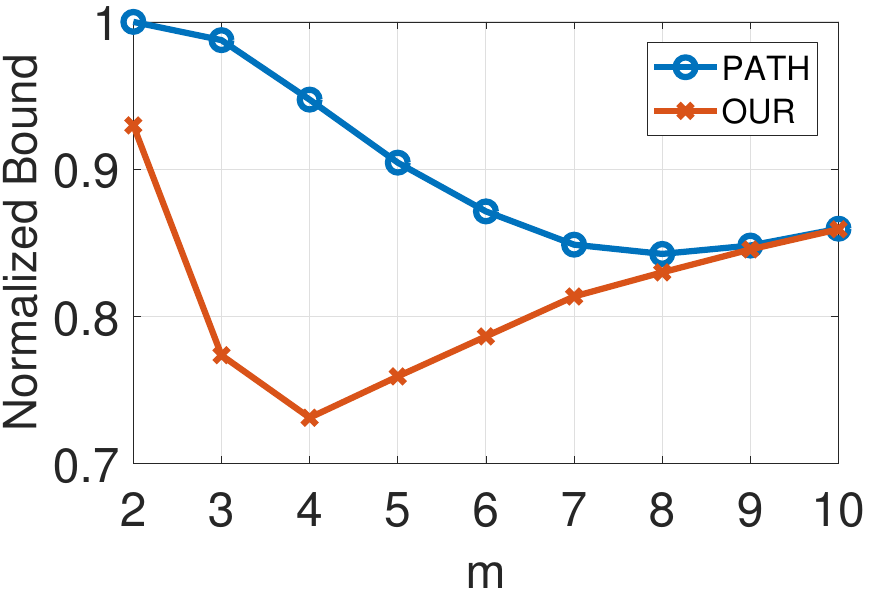}
    \label{fig:m}
}
\subfloat[parallelism factor]{
    \includegraphics[width=0.49\linewidth]{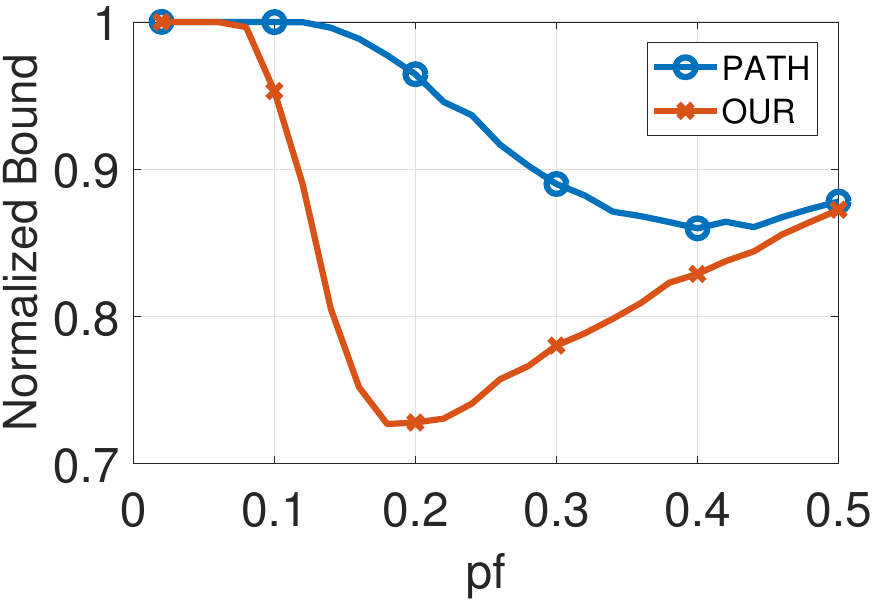}
    \label{fig:pf}
}
\hfil
\subfloat[vertex number]{
    \includegraphics[width=0.49\linewidth]{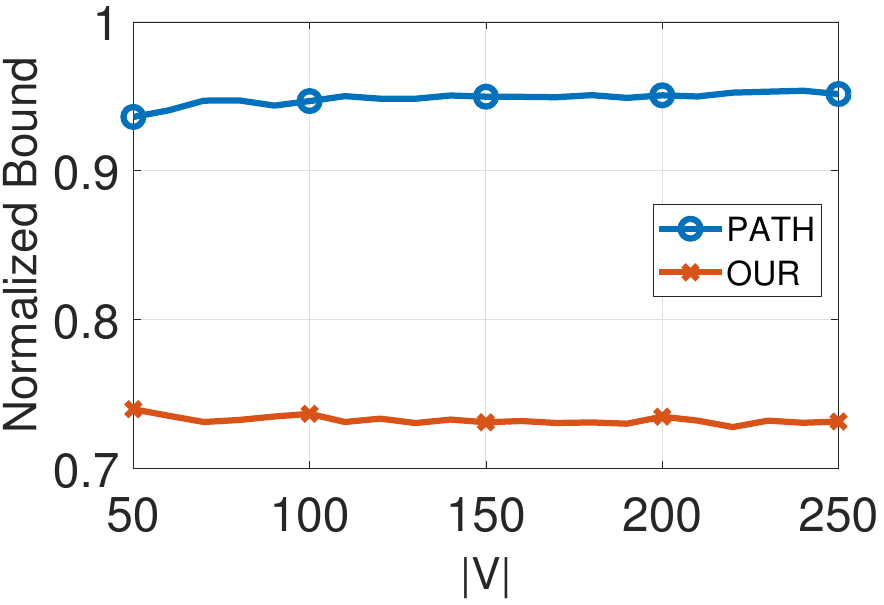}
    \label{fig:v}
}
\caption{Evaluation results of one task.}
\label{fig:evaluation_one}
\end{figure}

Fig. \ref{fig:pf} reports the results of changing the parallelism factor $\mathit{pf}$.
When $\mathit{pf}$ is close to 0, since there are fewer edges in the graph, all paths (including the longest path) are short.
For \textsf{PATH}, it is difficult to identify long paths; for \textsf{OUR}, it is difficult to construct long paths, since the longest path is also short and we do not increase the length of the longest path when adding edges.
Therefore, both normalized bounds become close to 1.
When $\mathit{pf}$ gets larger gradually, there are more edges in the graph.
In this stage, the lengths of paths in the graph are more diversified: some paths are short and some paths are long.
So it is relatively easy to connect short paths into long paths while keeping the length of the longest path unchanged.
Therefore, our method becomes more effective.
This explains why the data points of \textsf{OUR} in Fig. \ref{fig:pf} become decrease when $\mathit{pf} \le 0.2$.
When $\mathit{pf}$ is close to 0.5, the generated graph is more sequential.
In this stage, all paths in the graph are long.
It is difficult to connect paths by adding edges while keeping the length of the longest path unchanged.
Therefore, \textsf{OUR} becomes close to \textsf{PATH}.
This explains why the data points of \textsf{OUR} in Fig. \ref{fig:pf} become increase when $\mathit{pf} \ge 0.2$.
Compared to \textsf{PATH}, the maximum improvement for the normalized bound is 25\%.
Fig. \ref{fig:v} shows the results of changing the vertex number of DAG tasks.
Same as \textsf{PATH}, \textsf{OUR} is insensitive to the vertex number, which implies that our method can be applied to realistic applications with a large number of subtasks.
In this experiment, compared to \textsf{PATH}, the average improvement for the normalized bound is 21.6\%.

\subsection{Evaluation of Scheduling Task Sets}
\label{sec:evaluation_set}

This section evaluates the performance of the proposed approach for scheduling task sets.
The following approaches are compared.
\begin{itemize}
  \item \textsf{FED}. The original federated scheduling proposed in \cite{li2014analysis}.
  \item \textsf{PATH}. The federated scheduling approach in \cite{he2022bounding} by using the information of multiple long paths to reduce the number of cores required by an individual task.
  \item \textsf{OUR}. Our approach presented in Section \ref{sec:extension}.
\end{itemize}
As shown in \cite{he2022bounding}, \textsf{PATH} has the best performance among all existing scheduling approaches of different paradigms (federated, global, partitioned and decomposition-based, see Section \ref{sec:related}), so we only include \textsf{FED} and \textsf{PATH} in our comparison.

\textbf{Task Set Generation.}
DAG tasks are generated by the same method as Section \ref{sec:evaluation_one} with $c(v)$, $|V|$ and $\mathit{pf}$ randomly chosen in [50, 100], [50, 250], [0, 0.5], respectively.
The period $T$ (which equals $D$ in the experiment) is computed by $len(G)+\alpha(vol(G)-len(G))$, where $\alpha$ is a parameter. Same as the setting of \cite{he2022bounding}, we consider $\alpha$ in [0, 0.5] to let heavy tasks require at least two cores.
The number of cores $m$ is set to be 32 (but changing in Fig. \ref{fig:g_m}).
The \emph{utilization} of a task is defined to be $vol(G)/T$, and the utilization of a task set is the sum of all utilizations of tasks in this task set. 
The \emph{normalized utilization} of a task set is the utilization of this task set divided by the number of cores.
In the experiments, the normalized utilization $\mathit{nu}$ of task sets is randomly chosen in [0, 0.8].
To generate a task set with a specific utilization, we randomly generate DAG tasks and add them to the task set until the total utilization reaches the required value.
For each data point in Fig. \ref{fig:evaluation_set}, we randomly generate 1000 task sets to compute the average performance.

\begin{figure}[t]
\centering
\subfloat[core number]{
    \includegraphics[width=0.49\linewidth]{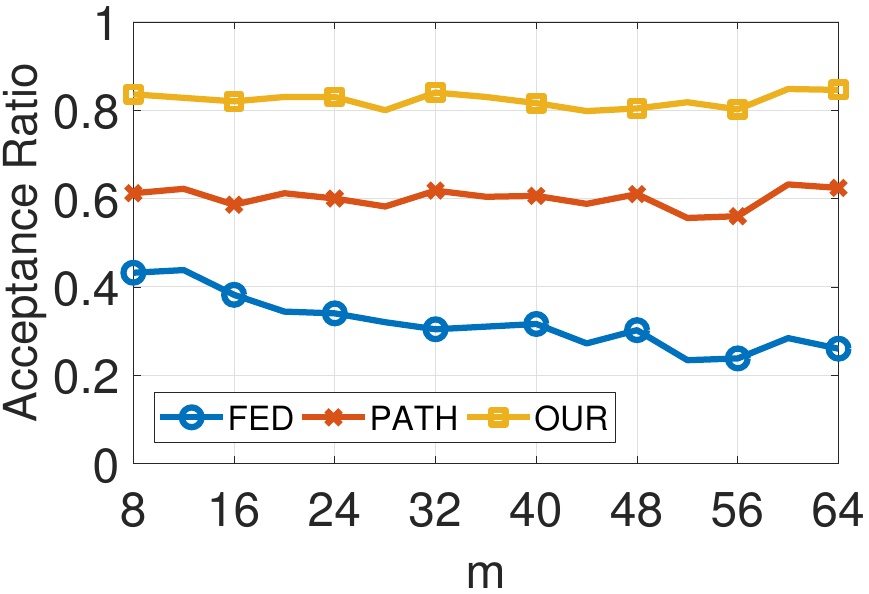}
    \label{fig:g_m}
}
\subfloat[normalized utilization]{
    \includegraphics[width=0.49\linewidth]{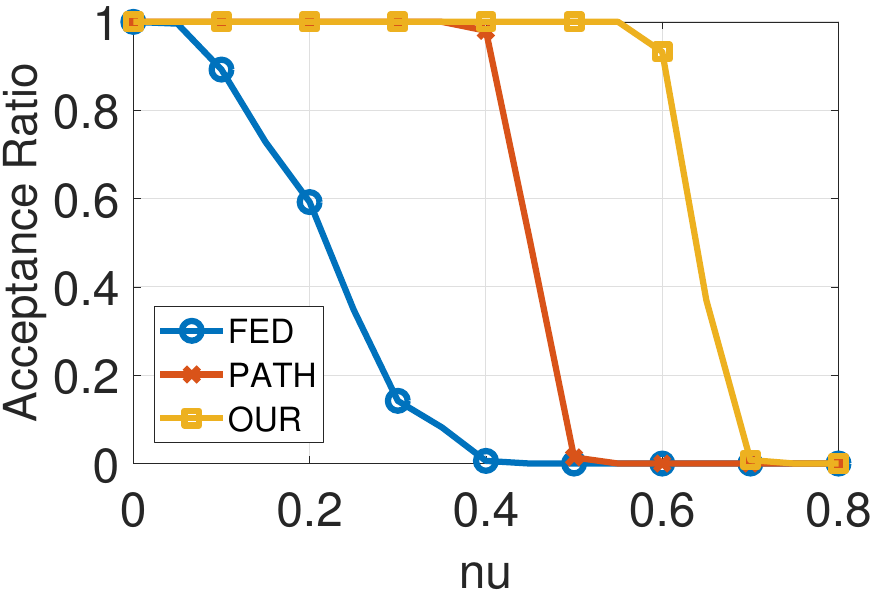}
    \label{fig:g_nu}
}
\hfil
\subfloat[deadline]{
    \includegraphics[width=0.49\linewidth]{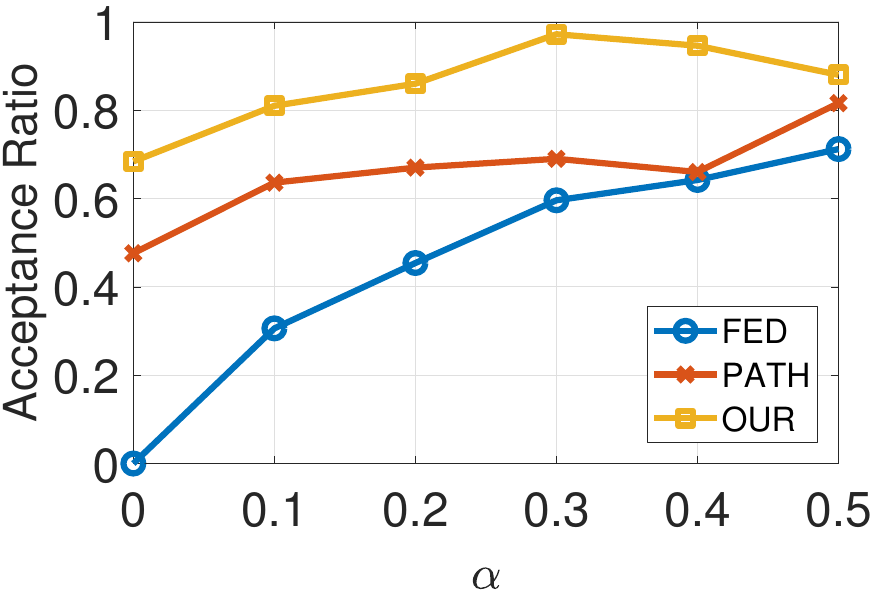}
    \label{fig:g_t}
}
\subfloat[parallelism factor]{
    \includegraphics[width=0.49\linewidth]{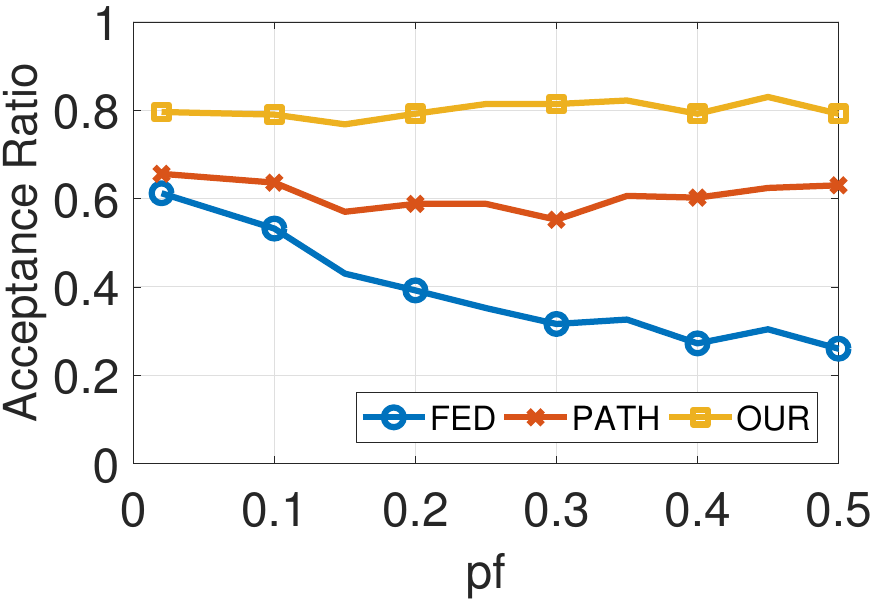}
    \label{fig:g_p}
}
\caption{Evaluation results of task sets.}
\label{fig:evaluation_set}
\end{figure}

We evaluate the schedulability of task sets using the acceptance ratio as the metric. The \emph{acceptance ratio} is the ratio between the number of schedulable task sets and the number of all task sets. The larger the acceptance ratio, the better the performance is.
The result of changing the number of cores on which task sets are scheduled is in Fig.~\ref{fig:g_m}, which shows that our approach and federated scheduling in general are insensitive to the number of cores. This is because in federated scheduling paradigm, heavy tasks are mostly allocated and executed on dedicated cores, thus different tasks cannot interfere with each other. This is important for scalability.
Compared to \textsf{PATH}, the average performance improvement of our approach is 22.2\% among all numbers of cores.
In the following, we choose $m=32$ as a representative for other experiments.

Fig. \ref{fig:g_nu} reports the results with different normalized utilizations.
Compared to \textsf{PATH}, the maximum improvement of acceptance ratio is 100\%, which means that there are task set configurations that all existing approaches cannot schedule but our approach can. This further demonstrates the effectiveness of the proposed approach.
Fig. \ref{fig:g_t} presents the result of changing $\alpha$. Different $\alpha$ means different deadlines. Fig. \ref{fig:g_t} shows similar trends as reported in \cite{he2022bounding}.
When $\alpha$ is close to 0, the deadline $D$ approaches $L$, which means that the tasks are more difficult to schedule. However, with the proposed technique of adding edges, the response time bound is greatly reduced and less likely to exceed the deadline. Therefore, tasks using our approach are more likely to be scheduled.
When $\alpha$ increases, the deadline $D$ becomes larger and close to the $vol(G)$. In this case, the performances of all scheduling approaches increase.
The maximum improvement in this experiment is 28.6\% with $\alpha=0.4$ compared to \textsf{PATH}.

Fig. \ref{fig:g_p} reports the result of changing the parallelism factor $\mathit{pf}$.
When $\mathit{pf}$ increases, there are more edges in the generated tasks, which means that the length of the longest path increases. \textsf{FED} assumes that other vertices in the task cannot execute in parallel with the longest path. When the length of the longest path increases, more computing resources are wasted and the performance of \textsf{FED} decreases. 
However, \textsf{PATH} can utilize the information of long paths and our approach can construct long paths such that different long paths can execute in parallel with each other, thus addressing this type of pessimism.
Therefore, the performances of \textsf{PATH} and \textsf{OUR} are not affected by $\mathit{pf}$.  
Due to space limitations, the results of changing the vertex number are not reported.
Same with the trends in Fig. \ref{fig:v}, our approach and federated scheduling in general, are insensitive to the number of vertices in DAG tasks.
Experiments show that the proposed approach consistently outperforms the state-of-the-art by a large margin, which is consistent with the theoretical result that our approach dominates \textsf{PATH}.

\section{Related Work}
\label{sec:related}
The most closely related work to this research is \cite{he2022bounding}, where He et al. proposed a response time bound for a DAG task using multiple long paths. Our work employs the bound of \cite{he2022bounding} as a guidance and proposes methods that systematically connect short paths into long paths by adding edges to reduce the worst-case response time bound for a DAG task.

Concerning modifying edges for DAG tasks in real-time scheduling, \cite{buttazzo2011hard} studied the timing anomaly of DAG tasks and observed that weakening the precedence constraints (i.e., removing edges) may lead to longer response time.
\cite{fonseca2017improved} considered removing edges of DAG tasks and transforming the DAG into a series-parallel graph \cite{he1987parallel} to derive a bound on the interference incurred by this task.
Different from the above works, we study the method of adding edges and focus on the response time bound and schedulability of DAG tasks.

For real-time scheduling of DAG tasks, existing approaches can be categorized into four major paradigms: federated scheduling \cite{li2014analysis, chen2016federated, baruah2015federatedconstrained, baruah2015federatedarbitrary, baruah2015federatedconditional, jiang2017semi, ueter2018reservation, jiang2021virtually, he2022bounding}, global scheduling \cite{li2013outstanding, chen2014capacity, nasri2019response, he2021response, dai2022response, he2023real}, partitioned scheduling \cite{fonseca2016response, casini2018partitioned}, and decomposition-based scheduling \cite{qamhieh2013global, saifullah2014parallel, jiang2016decomposition}.
Federated scheduling, where each DAG task is scheduled on a set of dedicated cores, is closely related to this work.
Federated scheduling was originally proposed in \cite{li2014analysis}. 
Later, federated scheduling was generalized to constrained deadline tasks \cite{baruah2015federatedconstrained}, arbitrary deadline tasks \cite{baruah2015federatedarbitrary}, and conditional DAG tasks \cite{baruah2015federatedconditional}.
A series of federated-based scheduling approaches \cite{jiang2017semi, ueter2018reservation, jiang2021virtually, he2022bounding} were proposed to address the resource-wasting problem in federated scheduling.

\section{Conclusion}
\label{sec:conclusion}
In this paper, through exploring the dependencies between vertices, we propose a method of adding edges and thus making longer paths in DAG tasks to optimize the worst-case response time bound.
We also apply the proposed techniques to the scheduling of multiple DAG tasks.
Experiments demonstrate that the proposed method significantly outperforms the state-of-the-art, reducing the worst-case response time bound by 21.6\% and improving the system schedulability by 22.2\% on average.

\bibliographystyle{IEEEtran}
\bibliography{reference}

\end{document}